\newcommand{\be}{\begin{equation}}
\newcommand{\ee}{\end{equation}}
\newcommand{\ba}{\begin{eqnarray}}
\newcommand{\ea}{\end{eqnarray}}
\newcommand{\ketbra}[2]{|#1\rangle \langle #2|}
\newcommand{\tr}{\operatorname{Tr}}
\newtheorem{thm}{Theorem}
\newtheorem{observation}{Observation}
\newtheorem{definition}{Definition}
\newtheorem{proposition}{Proposition}
\newcommand{\plane}[2]{$#1#2$\nobreakdash-plane}
\begin{document}
\title{On total correlations in a bipartite quantum joint probability distribution}   
   \author{C. Jebarathinam}
\email{jebarathinam@iisermohali.ac.in; jebarathinam@gmail.com}
\affiliation{Department of Physical Sciences
\\Indian Institute of Science Education and Research (IISER) Mohali, \\
Sector-81, S.A.S. Nagar, Manauli 140306, India.}

\date{\today}

\begin{abstract}
We discuss the problem of separating the total correlations in a given quantum joint probability distribution into nonlocality, contextuality and classical correlations.
Bell discord and Mermin discord which qunatify nonlocality and contextuality of quantum correlations
are interpreted as distance measures in the nonsignaling polytope. 
A measure of total correlations is introduced to divide the total amount of correlations into a purely nonclassical part and a 
classical part. We show that quantum correlations satisfy additivity relations among these three measures.
\end{abstract}

\maketitle


\section{Introduction}
When measurements on an ensemble of entangled particles give rise to the violations of a Bell inequality \cite{bell64,BNL}, 
one may ask the question of EPR2 \cite{EPR2} whether all the particle pairs in the ensemble behave nonlocally or only some pairs are nonlocally correlated and 
the other pairs are locally correlated. 
EPR2 approach to quantum correlation consists 
in decomposing the given quantum joint probability distribution into nonlocal and local distributions 
to find out whether the correlation is 
fully nonlocal or it has local content. 
EPR2 showed that if the particles pairs are in the singlet state, they all behave nonlocally. However, 
EPR2 showed that nonmaximally entangled states cannot have nonlocality purely. Thus, total correlations arising from measurements on composite quantum systems
can be divided into a purely nonlocal part and a local part.

In Ref. \cite{Jeba}, Bell discord and Mermin discord has been proposed as measures of quantum correlations to quantify nonlocality and contextuality of 
symmetric discordant states \cite{WZQD,GTC,revDis}
and it has been shown that any quantum correlation can be decomposed as a convex mixture of a irreducible nonlocal correlation,
a irreducible contextual correlation and a local box which has null Bell discord and null Mermin discord.
The canonical decomposition for quantum correlations suggests that when measurements on an ensemble
of bipartite quantum system gives rise to Bell discord and Mermin discord
simultaneously, the ensemble can be divided into a purely nonlocal part, a contextual part and a local part which might have classical correlations. 

In this work, we discuss the analogous problem of dividing the total correlations in a given quantum state into a purely nonclassical part and a classical part 
\cite{HV,GPW,Modietal} in quantum joint probability distributions. 
We show that Bell discord and Mermin discord are interpreted as distance measures in the nonsignaling polytope  and thus they are analogous to geometric measure of
quantum discord \cite{Dakicetal}.
Inspired by this interpretation, we define a third distance measure to quantify the amount  of total correlations in quantum joint probability distributions.
We study additivity relation for quantum correlations in two-qubit systems.

The paper is organized as follows. In Sec. \ref{NS}, we review the bipartite nonsignaling boxes which have two inputs and two outputs per party. 
In Sec. \ref{measures}, we interpret
Bell discord and Mermin discord as distance measures and we define the third distance measure. 
In Sec. \ref{QC}, we investigate total correlations in quantum boxes using these three measures. In Sec. \ref{disconc}, we present conclusions.
\section{NS polytope of Bell-CHSH scenario}\label{NS}
Bell-CHSH scenario \cite{chsh} can be abstractly described in terms black boxes shared between two spatially separated observers; Alice and Bob input two variables $A_i$ and 
$B_j$ into the box and obtain two distinct outputs $a_m$ and $b_n$ on their part of the box ($i,j,m,n\in\{0,1\}$). 
The behavior of a given box is described by the set of $16$ joint 
probability distributions,
\ba
P(a_m,b_n|A_i,B_j)&=&\frac{1}{4}[1+(-1)^m\braket{A_i}+(-1)^n\braket{B_j}\nonumber\\
&&+(-1)^{m\oplus n}\braket{A_iB_j}],
\ea
where $\braket{A_iB_j}=\sum_{m=n}P(a_m,b_n|A_i,B_j)-\sum_{m\ne n}P(a_m,b_n|A_i,B_j)$ are joint expectation values,
and, $\braket{A_i}=P(a_0|A_i)-P(a_1|A_i)$ and $\braket{B_j}=P(b_0|B_j)-P(b_1|B_j)$ are marginal expectation values.
Here $\oplus$ denotes addition modulus $2$.
The set of nonsignaling boxes ($\mathcal{N}$) corresponding to this scenario forms an $8$ dimensional convex polytope which has $24$ extremal boxes \cite{Barrett}:  
they are $8$ PR-boxes,
\be
P^{\alpha\beta\gamma}_{PR}(a_m,b_n|A_i,B_j)=\left\{
\begin{array}{lr}
\frac{1}{2}, & m\oplus n=ij \oplus \alpha i\oplus \beta j \oplus \gamma\\ 
0 , & \text{otherwise},\\
\end{array}
\right. \label{NLV}
\ee
and $16$ deterministic boxes:
\be
P^{\alpha\beta\gamma\epsilon}_D=\left\{
\begin{array}{lr}
1, & m=\alpha i\oplus \beta,\\
   & n=\gamma j\oplus \epsilon \\
0 , & \text{otherwise}.\\
\end{array}
\right.
\ee

The NS polytope can be divided into two parts: nonlocal region and Bell polytope. The set of local boxes forms the 
Bell polytope ($\mathcal{L}$) which is a convex hull of the $16$ deterministic boxes: if $P(a_m, b_n|A_i,B_j)\in \mathcal{L}$,   
\ba
P(a_m, b_n|A_i,B_j)=\sum^{15}_{l=0}q_lP^l_{D}; \sum_lq_l=1. \label{LD}
\ea
Here $l=\alpha\beta\gamma\epsilon$. Quantum correlations obtained by measurements on composite systems form a convex subset of nonsignaling correlations
and is sandwitched between the nonlocal region and the Bell polytope \cite{BNL}.

Local reversible operations (LRO) are analogous to local unitary operations in quantum theory. 
It is known that Alice and Bob can not decrease entanglement and can not create entanglement from separability by local unitary operations
on the quantum states \cite{Hetal}, similarly, nonlocality and locality are invariant under LRO 
as they does not convert a deterministic box into a PR-box and vice versa. 
LRO include relabeling of party's inputs and outputs as follows: 
Alice changing her input $i\rightarrow i\oplus 1$, and changing her output conditioned on the input: $m\rightarrow m\oplus\alpha i\oplus\beta$. 
Bob can perform similar operations. 
\section{The three distance measures}\label{measures}
The distance measures are useful tool in quantum information theory to quantify nonclassicality quantum states and to divide the total correlations 
in a given quantum state into a nonclassical part and a purely a classical part \cite{Hetal,revDis,Modietal}. 
In Ref. \cite{Modietal}, measures of quantum correlations that go beyond entanglement were defined using the idea of distance measures
and it was shown that the distance of a given state to its closest product state
gives total correlations. 
We will define a distance measure which is nonzero iff the given JPD is nonproduct to quantify 
total correlations in quantum JPD.
\subsection{Bell discord}
All the Bell-CHSH inequalities \cite{ACHSH},
\ba
\mathcal{B}_{\alpha\beta\gamma} &:= &(-1)^\gamma\braket{A_0B_0}+(-1)^{\beta \oplus \gamma}\braket{A_0B_1}\nonumber\\
&+&(-1)^{\alpha \oplus \gamma}\braket{A_1B_0}+(-1)^{\alpha \oplus \beta \oplus \gamma \oplus 1} \braket{A_1B_1}\le2, \label{BCHSH}
\ea   
form eight facets for the Bell polytope. 
We may consider the eight Bell functions, $\mathcal{B}_{\alpha\beta\gamma}$, 
to form the eight orthogonal coordinates for the metric space in which 
distance is measured by $\mathcal{B}_{\alpha\beta}:=|\mathcal{B}_{\alpha\beta\gamma}|$ as they satisfy triangle inequality.
\begin{observation}
Bell functions satisfy the triangle inequality,
\begin{equation}
\mathcal{B}_{\alpha\beta}(P_1,P_2)\le \mathcal{B}_{\alpha\beta}(P_1)+\mathcal{B}_{\alpha\beta}(P_2). \label{Tr}
\end{equation} 
\end{observation}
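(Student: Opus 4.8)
The plan is to reduce the claimed inequality~\eqref{Tr} to the elementary triangle inequality for the absolute value on $\mathbb{R}$, the only genuine input being that $\mathcal{B}_{\alpha\beta\gamma}$ is a \emph{linear} functional of the box. First I would note that, by \eqref{BCHSH}, $\mathcal{B}_{\alpha\beta\gamma}$ is a fixed signed sum of the four joint expectation values $\braket{A_0B_0},\braket{A_0B_1},\braket{A_1B_0},\braket{A_1B_1}$, and that each of these is a linear function of the sixteen numbers $P(a_m,b_n|A_i,B_j)$ with no constant term, since $\braket{A_iB_j}=\sum_{m=n}P-\sum_{m\ne n}P$. Hence $\mathcal{B}_{\alpha\beta\gamma}$ assigns to every box a real number, linearly, so for boxes $P_1,P_2$
\begin{equation}
\mathcal{B}_{\alpha\beta\gamma}(P_1)-\mathcal{B}_{\alpha\beta\gamma}(P_2)=\mathcal{B}_{\alpha\beta\gamma}(P_1-P_2),
\end{equation}
where $P_1-P_2$ denotes the formal difference of the two probability vectors; accordingly the distance between $P_1$ and $P_2$ in the coordinate labelled by $(\alpha,\beta)$ is read as $\mathcal{B}_{\alpha\beta}(P_1,P_2):=|\mathcal{B}_{\alpha\beta\gamma}(P_1)-\mathcal{B}_{\alpha\beta\gamma}(P_2)|$, consistent with $\mathcal{B}_{\alpha\beta}(P):=|\mathcal{B}_{\alpha\beta\gamma}(P)|$.

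Next I would set $x:=\mathcal{B}_{\alpha\beta\gamma}(P_1)$ and $y:=\mathcal{B}_{\alpha\beta\gamma}(P_2)$, which are real numbers, and invoke $|x-y|\le|x|+|y|$ (equivalently $|x+y|\le|x|+|y|$), whose right-hand side is $\mathcal{B}_{\alpha\beta}(P_1)+\mathcal{B}_{\alpha\beta}(P_2)$. Here one should record that $|\mathcal{B}_{\alpha\beta\gamma}|$ is a function of $(\alpha,\beta)$ alone: replacing $\gamma$ by $\gamma\oplus1$ in \eqref{BCHSH} negates all four exponents and hence $\mathcal{B}_{\alpha\beta(\gamma\oplus1)}=-\mathcal{B}_{\alpha\beta\gamma}$, so $\mathcal{B}_{\alpha\beta}$ is well defined. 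This already yields \eqref{Tr}. Moreover, running the same estimate with $P_2$ replaced by any box $P_0$ on which $\mathcal{B}_{\alpha\beta\gamma}$ vanishes --- for instance the maximally mixed box, or indeed any box whose four joint correlators all vanish --- exhibits \eqref{Tr} as the usual triangle inequality $\mathcal{B}_{\alpha\beta}(P_1,P_2)\le\mathcal{B}_{\alpha\beta}(P_1,P_0)+\mathcal{B}_{\alpha\beta}(P_0,P_2)$ with $P_0$ inserted as the intermediate point, so $\mathcal{B}_{\alpha\beta}$ behaves as a bona fide (pseudo)metric coordinate on $\mathcal{N}$, and an $\ell^p$ combination of these distance coordinates then gives a distance on the correlator part of the polytope.

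The argument has no real obstacle; the statement is elementary once the objects are identified. The single point that deserves a careful sentence rather than a one-line dismissal is the linearity of $\mathcal{B}_{\alpha\beta\gamma}$ in the box --- this is what turns the abstract ``triangle inequality among Bell functions'' into the scalar inequality $|x\pm y|\le|x|+|y|$ and makes the coordinate-distance interpretation legitimate. I would finally remark that the identical reasoning applies verbatim to the Mermin functionals, so the analogous property invoked later for Mermin discord requires no separate argument.
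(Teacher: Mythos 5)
Your proof is correct, and it is genuinely more general than the one in the paper. The paper does not argue the inequality abstractly: it only \emph{verifies} it on a single one-parameter family, taking $P_1=pP^{000}_{PR}+(1-p)P_N$ and $P_2=qP^{001}_{PR}+(1-q)P_N$ and reading off $\mathcal{B}_{00}(P_1,P_2)=4|p-q|\le 4p+4q$. That computation implicitly fixes the meaning of the (otherwise undefined) two-argument quantity as the difference of the unsigned coordinates, $\bigl|\,|\mathcal{B}_{\alpha\beta\gamma}(P_1)|-|\mathcal{B}_{\alpha\beta\gamma}(P_2)|\,\bigr|$, whereas you read it as $|\mathcal{B}_{\alpha\beta\gamma}(P_1)-\mathcal{B}_{\alpha\beta\gamma}(P_2)|$; on the paper's own example these two readings differ ($4|p-q|$ versus $4(p+q)$), but both are dominated by $|\mathcal{B}_{\alpha\beta\gamma}(P_1)|+|\mathcal{B}_{\alpha\beta\gamma}(P_2)|$ by the same scalar estimate $|x\pm y|\le|x|+|y|$, so your argument establishes the claimed bound under either convention. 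What your route buys is a proof valid for arbitrary nonsignaling boxes, resting only on the linearity of $\mathcal{B}_{\alpha\beta\gamma}$ in the probability table and on the well-definedness of $\mathcal{B}_{\alpha\beta}=|\mathcal{B}_{\alpha\beta\gamma}|$ (your check that $\mathcal{B}_{\alpha\beta(\gamma\oplus1)}=-\mathcal{B}_{\alpha\beta\gamma}$ is correct and is not made explicit in the paper); what the paper's computation buys is only an illustration of the intended geometric picture, and taken literally it does not prove the Observation for general $P_1,P_2$. Your closing remark that the identical reasoning covers the Mermin functionals is also consistent with how the paper later uses that property without separate argument.
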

\begin{proof}
Consider the following correlation,
\be
P=pP^{000}_{PR}+qP^{001}_{PR},
\ee
which has $\mathcal{B}_{00}(P)=4|p-q|$. Here $\mathcal{B}_{00}(P)$ can be regarded as measuring the distance between the boxes $P_1=pP^{000}_{PR}+(1-p)P_N$ and 
$P_2=qP^{001}_{PR}+(1-q)P_N$ which have $\mathcal{B}_{00}(P_1)=4p$ and $\mathcal{B}_{00}(P_2)=4q$. The triangle inequality in Eq. (\ref{Tr})
follows since $\mathcal{B}_{00}(P_1, P_2)=4|p-q|\le \mathcal{B}_{\alpha\beta}(P_1)+\mathcal{B}_{\alpha\beta}(P_2)= 4p+4q=4$. 
\end{proof}
The isotropic PR-boxes,
\be
P_{iPR}^{\alpha\beta\gamma}=p_{nl}P^{\alpha\beta\gamma}_{PR}+(1-p_{nl})P_N, \label{isoPR}
\ee
define the eight orthogonal coordinates in which each coordinate is a line joining a PR-box and white noise.
Geometrically for a given correlations, each $\mathcal{B}_{\alpha\beta}$ measures the distance of a box which is, in general, different than the given correlation. 
The white noise, $P_N$, which has 
$\mathcal{B}_{\alpha\beta\gamma}=0$ is at the origin. 
Since a PR-box can lie on top of only one facet, the distance of a PR-box from the origin is measured by only one of the Bell functions. 
For instance, the PR-box, $P^{00\gamma}_{PR}$, gives $\mathcal{B}_{00}=4$ and the rest of the $\mathcal{B}_{\alpha\beta}=0$; it is at the largest 
distance from the origin. Since the isotropic PR-boxes in Eq. (\ref{isoPR})  
lie along only one of the coordinates, they have $\mathcal{B}_{\alpha\beta}=4p_{nl}$ and the rest of the three Bell functions take zero. 
The deterministic boxes, which are at the faces of the Bell polytope, are simultaneously measured by
all the four Bell functions i.e., they give $\mathcal{B}_{\alpha\beta}=2$ for all $\alpha\beta$ as they lie on the hyperplane. 

Bell discord is constructed using the Bell functions as follows,
\be
\mathcal{G}=\min_i\mathcal{G}_i,
\ee
where $\mathcal{G}_1=\Big||\mathcal{B}_{00}-\mathcal{B}_{01}|-|\mathcal{B}_{10}-\mathcal{B}_{11}|\Big|$ and $\mathcal{G}_2$ and $\mathcal{G}_3$ are obtained by permuting 
$\mathcal{B}_{\alpha\beta}$ in $\mathcal{G}_1$. Here $0\le\mathcal{G}\le4$. The deterministic boxes have $\mathcal{G}=0$, whereas the PR-boxes have $\mathcal{G}=4$. 
As Bell discord is made up of $\mathcal{B}_{\alpha\beta}$, it also satisfies the triangle inequality. 
\begin{proposition}
If a nonextremal correlation has irreducible PR-box component, $\mathcal{G}$ measures how far the given correlation from a local box that does not have minimal
single PR-box excess in the metric space defined by the Bell functions.
\end{proposition}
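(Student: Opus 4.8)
The plan is to exploit the canonical decomposition of Ref.~\cite{Jeba}: a nonextremal box $P$ with a nontrivial irreducible PR-box component can be written as $P=p_{nl}\,P^{\alpha\beta\gamma}_{PR}+(1-p_{nl})\,P_{L}$ with $P_{L}\in\mathcal{L}$ and $p_{nl}$ the minimal weight of a single PR-box that can appear in such a decomposition. Since the eight functions $\mathcal{B}_{\alpha'\beta'\gamma'}$ are linear in the box and vanish at $P_{N}$, the first step is to evaluate the four coordinates $\mathcal{B}_{\alpha'\beta'}(P)=|\mathcal{B}_{\alpha'\beta'\gamma'}(P)|$ on this decomposition. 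A PR-box saturates exactly one CHSH facet and gives zero on the other three, so (choosing $\gamma$ so that the PR-box contributes $+4$) one gets $\mathcal{B}_{\alpha\beta}(P)=|4p_{nl}+(1-p_{nl})\mathcal{B}_{\alpha\beta\gamma}(P_{L})|$ on the distinguished coordinate and $\mathcal{B}_{\alpha'\beta'}(P)=(1-p_{nl})\mathcal{B}_{\alpha'\beta'}(P_{L})$ on the three spectator coordinates, which carry no $p_{nl}$-contribution. This step is routine given the explicit forms of the extremal boxes.

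The second step is to exhibit the reference local box $\tilde P_{L}$. The construction leaves the $P_{L}$-part of $P$ untouched and replaces the irreducible PR-box by an appropriate local box on the corresponding CHSH face (equivalently, a flattening of the PR-box towards $P_{N}$), so that $\tilde P_{L}\in\mathcal{L}$, $\tilde P_{L}$ agrees with $P$ on all three spectator coordinates, and $\mathcal{B}_{\alpha\beta}(\tilde P_{L})<\mathcal{B}_{\alpha\beta}(P)$ on the distinguished one. The interpretive point I would take most care over is that $\tilde P_{L}$ is \emph{not} the local box realizing the minimal single-PR-box decomposition of $P$ --- that is $P_{L}$, reached with strictly smaller PR-box weight --- so $\tilde P_{L}$ is exactly ``a local box that does not have minimal single PR-box excess,'' as asserted.

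The third step is to assemble $\mathcal{G}$ and match it to the distance. Feeding the coordinate differences $\mathcal{B}_{\alpha'\beta'}(P,\tilde P_{L})=|\mathcal{B}_{\alpha'\beta'}(P)-\mathcal{B}_{\alpha'\beta'}(\tilde P_{L})|$, which are supported on the single distinguished coordinate, into $\mathcal{G}_{1},\mathcal{G}_{2},\mathcal{G}_{3}$ and taking the minimum yields a value determined by that one coordinate alone. Evaluating $\mathcal{G}(P)$ directly on the Bell-function values from step one, the local contribution of $P_{L}$ cancels between the paired terms inside the nested absolute values of the minimizing $\mathcal{G}_{i}$, in the same way the white-noise term cancelled in the proof of Observation~1, and one recovers the same value. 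Hence $\mathcal{G}(P)=\mathcal{G}(P,\tilde P_{L})$, and since $\mathcal{G}$ inherits the triangle inequality from the Bell functions this is a bona fide distance in the metric space spanned by the $\mathcal{B}_{\alpha\beta}$.

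I expect the second step to be the main obstacle. One must make ``minimal single PR-box excess'' precise, prove that $\tilde P_{L}$ has the stated status among all local boxes through which $P$ admits a single-PR-box decomposition while still being the box that controls $\mathcal{G}$, check that $\min_{i}\mathcal{G}_{i}$ is always attained at the index matching the distinguished coordinate for boxes in this class (the combinatorial part), handle the sign choice of $\gamma$ so that $\mathcal{B}_{\alpha\beta\gamma}(P)\ge 0$, and observe that non-uniqueness of the decomposition is harmless because $\mathcal{G}$ depends on $P$ only through the linear functionals $\mathcal{B}_{\alpha'\beta'\gamma'}$.
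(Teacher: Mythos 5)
Your proposal rests on the same pillar as the paper's proof: the canonical decomposition $P=\mathcal{G}'P^{\alpha\beta\gamma}_{PR}+(1-\mathcal{G}')P^{\mathcal{G}=0}_L$ of Ref.~\cite{Jeba} together with the fact that $\mathcal{G}(P)=4\mathcal{G}'$. The difference lies in how the ``distance'' is cashed out. The paper's argument is two sentences long: it takes the reference box to be the $\mathcal{G}=0$ local box of the canonical decomposition itself and reads $4\mathcal{G}'$ as the position of $P$ along the segment joining that box to the PR-box (normalized so that the PR-box sits at distance $4$); it computes no coordinate differences and does not re-derive $\mathcal{G}(P)=4\mathcal{G}'$, which is imported wholesale from \cite{Jeba}. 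You instead build an auxiliary local box $\tilde P_L=p_{nl}P_N+(1-p_{nl})P_L$ and verify that the literal Bell-coordinate differences between $P$ and $\tilde P_L$ are supported on a single coordinate with value $4p_{nl}$. That is arguably the more defensible computation --- the coordinatewise difference between $P$ and $P^{\mathcal{G}=0}_L$ itself is $\mathcal{G}'\left(4-\mathcal{B}_{\alpha\beta}(P^{\mathcal{G}=0}_L)\right)$, not $4\mathcal{G}'$, so the paper's identification only holds in this affine, normalized sense --- but be aware that your parsing of ``a local box that does not have minimal single PR-box excess'' diverges from the paper's: the paper means the $\mathcal{G}=0$ box appearing in the decomposition (a box with no irreducible PR-box excess at all), not a local box distinct from the one realizing the minimal decomposition. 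Finally, the combinatorial checks you flag as the main obstacle (which $\mathcal{G}_i$ attains the minimum, minimality and well-definedness of $p_{nl}$) are not carried out in the paper either; they are buried in the citation to \cite{Jeba}, so your proof is not missing anything the paper actually supplies.
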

\begin{proof}
Any NS correlation can be written as a convex combination of a irreducible PR-box and a local box which has $\mathcal{G}=0$ \cite{Jeba},
\be
P=\mathcal{G'}P^{\alpha\beta\gamma}_{PR}+(1-\mathcal{G'})P^{\mathcal{G}=0}_L. \label{gcano}
\ee
This decomposition implies that the correlation which has irreducible PR-box component lies between the line joining the single PR-box and the local box that 
does not have minimal single PR-box excess. Thus, Bell discord of the correlation in Eq. (\ref{gcano}) given by $\mathcal{G}(P)=4\mathcal{G}'$ gives  
distance of the given correlation from the $\mathcal{G}=0$ box in the canonical decomposition.
\end{proof}

Consider the following correlations,
\be
P=pP^{000}_{PR}+qP^{0000}_D.
\ee
For these correlations, $\mathcal{B}_{000}=p\mathcal{B}_{000}(P^{000}_{PR})+q\mathcal{B}_{000}(P_D)=4p+2q=2(p+1)$ and $\mathcal{G}=4p$. 
Notice that, $\mathcal{B}_{00}\ge \mathcal{G}$; $\mathcal{B}_{00}$ measures the distance of the correlation from the origin and is equal to the sum of the distance of 
the noisy deterministic box, $qP_D+(1-q)P_N$, and the noisy PR-box, $pP_{PR}+(1-p)P_N$, whereas $\mathcal{G}$ measures the distance of the correlation from the 
deterministic box and is equal to the distance of the correlation from the origin minus the distance of the noisy deterministic box. 

\textit{Violation of Bell-CHSH inequality versus Bell discord:-}
The canonical decomposition in Eq. (\ref{gcano}) gives $\mathcal{B}_{\alpha\beta\gamma}(P)=4\mathcal{G'}+l(1-\mathcal{G'})$, where $l=\mathcal{B}_{\alpha\beta\gamma}\left(P^{\mathcal{G}=0}_L\right)$.
Consider the case when $l\ge0$. If $\mathcal{G'}>\frac{1}{2}$, it is for sure that the correlation gives
the violation of the Bell-CHSH inequality. Now consider the following two cases.

(i) Suppose $\mathcal{B}_{\alpha\beta\gamma}\left(P^{\mathcal{G}=0}_L\right)=0$, the correlations can not give rise to the violation of the Bell-CHSH inequality when $0\le p\le\frac{1}{2}$.
Therefore, for the violation of the Bell-CHSH inequality upon increasing the PR-box content, first the box has to be lifted to the face of the Bell polytope by the PR-box content which happens at $\mathcal{G'}=\frac{1}{2}$. 

(ii) Suppose $\mathcal{B}_{\alpha\beta\gamma}\left(P^{\mathcal{G}=0}_L\right)=2$. Then any small amount of the PR-box content will give rise to the violation of the Bell-CHSH
inequality because the box lies on the face of the Bell polytope when $\mathcal{G'}=0$. 

Thus, the violation of Bell inequality depends on the amount of 
irreducible PR-box content as well as the local box in the 
cannical decomposition, whereas nonzero Bell discord depends only on the amount of irreducible PR-box content. 
Popescu and Rohrlich showed that all pure entangled states violate a Bell-CHSH inequality \cite{PRQB}. However, there are mixed entangled states which do not violate 
a Bell-CHSH inequality \cite{Hetal}.
The reason for the nonviolation of any Bell inequality by some entangled states is that the local box in the canonical decomposition does not have sufficient amount of magnitude for
the Bell operator to lift the correlation to go outside the Bell polytope. 
\subsection{Mermin discord}
We may as well consider the eight Mermin functions, 
\ba
\mathcal{M}_{\alpha\beta\gamma}&:=
&(\alpha\oplus\beta\oplus1)\{(-1)^{\beta}\braket{A_0B_1}\!+\!(-1)^{\alpha}\braket{A_1B_0}\}\nonumber\\ &&+(\alpha\oplus\beta)\{(-1)^{\gamma}\braket{A_0B_0}+(-1)^{\alpha\oplus\beta\oplus\gamma\oplus 1}\braket{A_1B_1}\}\nonumber\\
   && \text{for} \quad \alpha\beta\gamma=00\gamma,01\gamma;\nonumber \\
\mathcal{M}_{\alpha\beta\gamma}&:=&(\alpha\oplus\beta)\{(-1)^{\beta}\braket{A_0B_1}\!+\!(-1)^{\alpha}\braket{A_1B_0}\}\nonumber\\ &&+(\alpha\oplus\beta\oplus1)\{(-1)^{\gamma}\braket{A_0B_0}+(-1)^{\alpha\oplus\beta\oplus\gamma\oplus1}\braket{A_1B_1}\}\nonumber\\
   && \text{for} \quad \alpha\beta\gamma=10\gamma,11\gamma,
\ea 
to form eight orthogonal coordinates for the metric space in which $\mathcal{M}_{\alpha\beta}:=|\mathcal{M}_{\alpha\beta\gamma}|$ serve as the distance function.
The eight Mermin boxes, $P^{\alpha\beta\gamma}_M$: for $\alpha\beta\gamma=00\gamma,10\gamma$, 
\be
P_M^{\alpha\beta\gamma}(a_m,b_n|A_i,B_j)=\left\{
\begin{array}{lr}
\frac{1}{4}, & i\oplus j =0 \\
\frac{1}{2}, & m\oplus n=i\cdot j \oplus  \alpha  i \oplus  \beta j \oplus \gamma\\ 
0 , & \text{otherwise},\nonumber\\
\end{array}
\right.
\ee
and, for $\alpha\beta\gamma=01\gamma,11\gamma$,
\be
P_{M}^{\alpha\beta\gamma}(a_m,b_n|A_i,B_j)=\left\{
\begin{array}{lr}
\frac{1}{4}, & i\oplus j =1 \\
\frac{1}{2}, & m\oplus n=i\cdot j \oplus  \alpha  i\oplus \beta j \oplus  \gamma  \\ 
0 , & \text{otherwise}\\
\end{array} \label{Mmmm}
\right.
\ee
lie along extremum of only one of the coordinates. Therefore, the distance of the isotropic Mermin boxes,
\be
P_{iM}^{\alpha\beta\gamma}=p_{c}P^{\alpha\beta\gamma}_M+(1-p_{c})P_N, \label{isoM}
\ee
are measured by only one of the Mermin functions i.e., they have $\mathcal{M}_{\alpha\beta}=2p_{c}$ and the rest of the three Mermin functions take zero.

Mermin discord is constructed using the Mermin functions as follows,
\be
\mathcal{Q}=\min_i\mathcal{Q}_i.
\ee
Here $\mathcal{Q}_1=\Big||\mathcal{M}_{00}-\mathcal{M}_{01}|-|\mathcal{M}_{10}-\mathcal{M}_{11}|\Big|$ and $\mathcal{Q}_2$ and $\mathcal{Q}_3$ are obtained by permuting 
$\mathcal{M}_{\alpha\beta}$ in $\mathcal{Q}_1$. Since the distance of the PR-boxes and the deterministic boxes are simultaneously measured by two Mermin functions (i.e., they
lie on the hyperplane), they have $\mathcal{Q}=0$. The isotropic Mermin boxes in Eq. (\ref{isoM}) have $\mathcal{Q}=2p_c$.
\begin{proposition}
If a given correlation has nonzero Mermin discord, $\mathcal{Q}$ measures the distance of the given correlation from a correlation that does not have minimal single
Mermin box excess in the metric space of Mermin functions.
\end{proposition}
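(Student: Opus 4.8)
The plan is to mirror the argument given for Bell discord, with the PR-box canonical decomposition replaced by its Mermin-box counterpart. First I would invoke the canonical decomposition of Ref.~\cite{Jeba}: any nonsignaling correlation with nonzero Mermin discord can be written as a convex combination of a single irreducible Mermin box and a box having vanishing Mermin discord,
\be
P=\mathcal{Q'}P^{\alpha\beta\gamma}_M+(1-\mathcal{Q'})P^{\mathcal{Q}=0}. \label{qcano}
\ee
The geometric content of this identity is that $P$ lies on the segment joining the extremal Mermin box $P^{\alpha\beta\gamma}_M$ to the $\mathcal{Q}=0$ box, so the natural candidate for the distance is the length of that segment measured in the metric generated by the Mermin functions.

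Next I would check that the Mermin functions actually furnish a metric, i.e.\ that $\mathcal{M}_{\alpha\beta}(P_1,P_2)\le\mathcal{M}_{\alpha\beta}(P_1)+\mathcal{M}_{\alpha\beta}(P_2)$, exactly as in the Observation for the Bell functions; this follows from the linearity of each $\mathcal{M}_{\alpha\beta\gamma}$ in the joint expectation values together with the ordinary triangle inequality for $|\cdot|$, and it passes to $\mathcal{Q}=\min_i\mathcal{Q}_i$ since a pointwise minimum of functions that each obey the triangle inequality still does. With this in hand, the remaining step is the computation: evaluating $\mathcal{M}_{\alpha\beta\gamma}$ on the right-hand side of Eq.~(\ref{qcano}) and substituting into $\mathcal{Q}_1,\mathcal{Q}_2,\mathcal{Q}_3$, one should obtain $\mathcal{Q}(P)=2\mathcal{Q'}$: the extremal Mermin box contributes $\mathcal{M}_{\alpha\beta}=2$ along exactly one coordinate (as recorded after Eq.~(\ref{isoM})) while the $\mathcal{Q}=0$ box, lying on the relevant Mermin hyperplane, drops out of the nested-absolute-value combination. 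Hence $\mathcal{Q}(P)=2\mathcal{Q'}$ is exactly the distance from $P$ to the $\mathcal{Q}=0$ box of the decomposition, which is precisely the box that does not carry the minimal single Mermin-box excess.

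The step I expect to be the main obstacle is this last one. Unlike the Bell case, where each deterministic and PR box sits on a single facet, here one must verify that the contribution of $P^{\mathcal{Q}=0}$ genuinely cancels inside $\mathcal{Q}_i=\big||\mathcal{M}_{00}-\mathcal{M}_{01}|-|\mathcal{M}_{10}-\mathcal{M}_{11}|\big|$ and its permutations for every admissible sign pattern of the four numbers $\mathcal{M}_{\alpha\beta\gamma}\big(P^{\mathcal{Q}=0}\big)$, and that the index $i$ achieving the minimum is the same one that isolates the active Mermin coordinate. This reduces to a short case analysis over the few possible sign configurations of the Mermin functions of a $\mathcal{Q}=0$ box; once that is settled, the identity $\mathcal{Q}(P)=2\mathcal{Q'}$, and hence the claimed geometric interpretation, follows.
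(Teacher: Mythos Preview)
Your proposal is correct and follows essentially the same line as the paper's own proof: invoke the canonical decomposition of Ref.~\cite{Jeba} into an irreducible $\mathcal{Q}=2$ Mermin box and a $\mathcal{Q}=0$ box, observe that $P$ lies on the segment joining them, and read off $\mathcal{Q}(P)=2\mathcal{Q}'$ as the distance to the $\mathcal{Q}=0$ endpoint. The additional care you take---verifying the triangle inequality for the $\mathcal{M}_{\alpha\beta}$ and flagging the sign-pattern case analysis needed to see the $\mathcal{Q}=0$ contribution drop out---goes beyond what the paper spells out, but does not change the argument.
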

\begin{proof}
Any NS correlation can be decomposed as a convex mixture of a $\mathcal{Q}=2$ box which lies on extremum of one of the coordinates and a $\mathcal{Q}=0$ box \cite{Jeba},
\be
P=\mathcal{Q'}P^{\alpha\beta\gamma}_{\mathcal{Q}=2}+(1-\mathcal{Q'})P_{\mathcal{Q}=0}. \label{canoQ}
\ee
This decomposition implies that the correlation that has irreducible Mermin box component lies between a line joining the $\mathcal{Q}=2$ box and the $\mathcal{Q}=0$ box.
Thus, Mermin discord of the correlation in Eq. (\ref{canoQ}) given by $\mathcal{Q}(P)=2\mathcal{Q}'$ measures distance of the given correlation from the $\mathcal{Q}=0$ box, 
$P_{\mathcal{Q}=0}$, in the  canonical decomposition.  
\end{proof}
\subsection{$\mathcal{T}$ measure}
Upto local reversible operations any quantum correlation can be decomposed as a convex mixture of a PR-box and a Mermin-box and a restricted local box,
\be
P=\mathcal{G}'P^{000}_{PR}+\mathcal{Q}'\left(\frac{P^{000}_{PR}+P^{11\gamma}_{PR}}{2}\right)+(1-\mathcal{G}'-\mathcal{Q}')P^{\mathcal{G}=0}_{\mathcal{Q}=0}, \label{CQd}
\ee
where $\frac{1}{2}\left(P^{000}_{PR}+P^{11\gamma}_{PR}\right)$ are the two Mermin boxes canonical to the PR-box, $P^{000}_{PR}$,
and $P^{\mathcal{G}=0}_{\mathcal{Q}=0}$ is the local box which has $\mathcal{G}=\mathcal{Q}=0$. The local box in this decomposition is, in general, a nonproduct
box and therefore possesses classical correlations. Thus, the canonical decomposition implies that total nonclassical correlation 
in a given quantum joint probability distribution
is a sum of Bell discord and Mermin discord.

The observation that $\mathcal{G}$ and $\mathcal{Q}$ measure the distance of the given box from the corresponding 
$\mathcal{G}=0$ box and $\mathcal{Q}=0$ box invites us to define the quantity $\mathcal{T}$ that gives
distance of the given quantum box from the corresponding uncorrelated box which is a product of the marginals of the given box. 
\begin{definition}
$\mathcal{T}$ is defined as follows,
\be
\mathcal{T}=\max_{\alpha\beta}\mathcal{T}_{\alpha\beta}. \label{tot}
\ee
Here,
\be
\mathcal{T}_{\alpha\beta}=|\mathcal{B}_{\alpha\beta}-\mathcal{B}^{prod}_{\alpha\beta}|, \nonumber
\ee
where, 
\ba
\mathcal{B}^{prod}_{\alpha\beta}&=&|\braket{A_0}\braket{B_0}+(-1)^\beta\braket{A_0}\braket{B_1}\nonumber
\\&&+(-1)^\alpha\braket{A_1}\braket{B_0}+(-1)^{\alpha\oplus\beta\oplus1}\braket{A_1}\braket{B_1}|.\nonumber
\ea 
\end{definition}
This measure has the following properties:
\begin{enumerate}
\item $\mathcal{T}\ge0$.
 \item $\mathcal{T}=0$ iff the box is product i.e., $P(a_m,b_n|A_i,B_j)=P_A(a_m|A_i)P_B(b_n|B_j)$.
\begin{proof}
 Since $\mathcal{B}_{\alpha\beta}=\mathcal{B}^{prod}_{\alpha\beta}$ for the product box, $\mathcal{T}_{\alpha\beta}=0$ $\forall$ $\alpha\beta$.
For any box that can not written in the product form, $\mathcal{B}_{\alpha\beta}\ne\mathcal{B}^{prod}_{\alpha\beta}$ which, in turn, implies that $\mathcal{T}_{\alpha\beta}>0$
for any nonproduct box.
\end{proof}
\item Maximization in Eq. (\ref{tot}) makes $\mathcal{T}$ invariant under LRO and permutation of the parties.
As the canonical decomposition for quantum correlations in Eq. (\ref{CQd}) implies that $\max\mathcal{B}_{\alpha\beta}$ 
contains the total amount of nonclassicality in the given JPD, maximization is used in Eq. (\ref{tot}) rather than minimization.
\begin{proof}
Under local reversible operations and the permutation of the parties $\mathcal{T}_{\alpha\beta}$ in Eq. (\ref{tot}) transform into each
other.
\end{proof}
\end{enumerate}
\begin{thm}
As a consequence of the three properties of $\mathcal{T}$ given above, the additivity relation for quantum correlations follows,
\be
\mathcal{T}=\mathcal{G}+\mathcal{Q}\pm\mathcal{C}.
\ee
Here $\mathcal{C}$ quantifies classical correlations.
\end{thm}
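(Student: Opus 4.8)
The plan is to read off the additivity relation directly from the canonical decomposition in Eq.~(\ref{CQd}) together with the three properties of $\mathcal{T}$ just established. First I would recall, from Propositions~1 and~2, that the weights in Eq.~(\ref{CQd}) are fixed by the discords themselves: $\mathcal{G}(P)=4\mathcal{G}'$ and $\mathcal{Q}(P)=2\mathcal{Q}'$, so the only remaining piece of the decomposition is the restricted local box $P^{\mathcal{G}=0}_{\mathcal{Q}=0}$, carried with weight $1-\mathcal{G}'-\mathcal{Q}'$. I would then \emph{define} the classical correlation $\mathcal{C}$ to be the amount of purely classical correlation contributed by this residual box, i.e.\ the part of $\max_{\alpha\beta}|\mathcal{B}_{\alpha\beta}-\mathcal{B}^{prod}_{\alpha\beta}|$ that originates from $P^{\mathcal{G}=0}_{\mathcal{Q}=0}$. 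Property~2 of $\mathcal{T}$, applied to that box, guarantees $\mathcal{C}\ge 0$ with $\mathcal{C}=0$ precisely when the residual box is a product, i.e.\ carries no classical correlation.

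Next I would evaluate $\mathcal{T}_{\alpha\beta}(P)$ on the decomposition~(\ref{CQd}). The key simplification is that the PR-boxes $P^{\alpha\beta\gamma}_{PR}$ and the Mermin boxes $P^{\alpha\beta\gamma}_{M}$ all have vanishing single-party marginals, $\braket{A_i}=\braket{B_j}=0$; since the probabilities, hence the marginals and the functions $\mathcal{B}_{\alpha\beta\gamma}$, are linear in $P$, the product term $\mathcal{B}^{prod}_{\alpha\beta}(P)$ depends only on $(1-\mathcal{G}'-\mathcal{Q}')^2$ times the marginals of the residual box, while $\mathcal{B}_{\alpha\beta\gamma}(P)$ splits into a PR contribution, a Mermin contribution and a residual-box contribution. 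In the coordinate direction $\alpha\beta$ that achieves the maximum in Eq.~(\ref{tot})---which, by the structure of Eq.~(\ref{CQd}) and Proposition~1, is the direction along which the irreducible PR-box lies---the PR and Mermin contributions add up to exactly $\mathcal{G}+\mathcal{Q}$, and what remains is the difference between the residual box's Bell function and its product Bell function. This leftover is $\pm\mathcal{C}$, the sign being positive or negative according to whether $\mathcal{B}_{\alpha\beta}(P^{\mathcal{G}=0}_{\mathcal{Q}=0})$ and $\mathcal{B}^{prod}_{\alpha\beta}(P^{\mathcal{G}=0}_{\mathcal{Q}=0})$ are aligned or anti-aligned in that direction.

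Finally I would invoke property~3: the maximization in Eq.~(\ref{tot}) makes $\mathcal{T}$ invariant under LRO and under swapping the parties, so the choice of maximizing direction is unambiguous and the relation is independent of the particular representative used in Eq.~(\ref{CQd}). Combined with the identification of $\mathcal{C}$ above, this yields $\mathcal{T}=\mathcal{G}+\mathcal{Q}\pm\mathcal{C}$.

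The step I expect to be the real obstacle is the bookkeeping of absolute values in the middle paragraph: $\mathcal{B}_{\alpha\beta}$, $\mathcal{B}^{prod}_{\alpha\beta}$ and $\mathcal{T}_{\alpha\beta}$ are each defined through a modulus, so the clean linear split $\mathcal{T}=\mathcal{G}+\mathcal{Q}\pm\mathcal{C}$ only holds in the one coordinate direction where all the relevant signs cooperate, and one must check that the maximization in Eq.~(\ref{tot}) actually selects that direction and that no cross-terms between the PR, Mermin and local pieces survive there. Pinning this down---and recognizing that the overall sign genuinely cannot be fixed in general, which is the origin of the $\pm$---is where the argument has to do its work; the rest is essentially forced by the linearity of the Bell functions and the three already-established properties of $\mathcal{T}$.
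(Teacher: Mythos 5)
Your proposal follows essentially the same route as the paper: evaluate $\mathcal{T}$ on the canonical decomposition in Eq.~(\ref{CQd}) along the coordinate $\alpha\beta=00$ that achieves the maximum, split $\mathcal{B}_{00}(P)$ linearly into $4\mathcal{G}'+2\mathcal{Q}'$ plus the residual local box's contribution, and take $\mathcal{C}$ to be that residual term. The only substantive difference is that you track the product term as quadratic in the local-box weight, scaling as $(1-\mathcal{G}'-\mathcal{Q}')^{2}$, whereas the paper absorbs it with a single factor of $(1-\mathcal{G}'-\mathcal{Q}')$ into its expression for $\mathcal{C}$; since $\mathcal{C}$ is in effect defined by this equation, this does not change the conclusion.
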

\begin{proof}
Consider the correlation given by the canonical decomposition in Eq. (\ref{CQd}). Since this correlation maximizes $\mathcal{B}_{00}$,
\ba
\mathcal{T}(P)&=&|\mathcal{B}_{00}(P)-\mathcal{B}^{prod}_{00}(P)|\nonumber\\
\!&=&\!\left|4\mathcal{G}'\!+\!2\mathcal{Q}'\!+\!\left(1-\mathcal{G}'-\mathcal{Q}'\right)
\left(\mathcal{B}_{00}\left(P^{\mathcal{Q}=0}_{\mathcal{G}=0}\right)-\mathcal{B}^{prod}_{00}\left(P^{\mathcal{Q}=0}_{\mathcal{G}=0}\right)\right)\right|\nonumber\\
&=&\mathcal{G}+\mathcal{Q}\pm\mathcal{C},
\ea
where 
\be
\mathcal{C}=\left(1-\mathcal{G}'-\mathcal{Q}'\right)\left|\mathcal{B}_{00}\left(P^{\mathcal{Q}=0}_{\mathcal{G}=0}\right)-\mathcal{B}^{prod}_{00}\left(P^{\mathcal{Q}=0}_{\mathcal{G}=0}\right)\right|.
\ee
\end{proof}
\section{Quantum correlations}\label{QC}
Here we study total correlations in the quantum boxes obtained by spin projective measurements on the two-qubit systems: Alice performs measurements $A_i=\hat{a}_i\cdot\vec{\sigma}$
on her qubit along the two directions $\hat{a}_i$ and Bob performs measurements $B_j=\hat{b}_j\cdot\vec{\sigma}$
on her qubit along the two directions $\hat{b}_j$.  
Quantum theory predicts the behavior of the given box as follows,
\begin{equation}
P(a_m,b_n|A_i,B_j)=\tr\left(\rho_{AB}\Pi_{A_i}^{a_m}\otimes\Pi_{B_j}^{b_n}\right),
\end{equation}
where $\rho_{AB}$ is a density operator in the Hilbert space $\mathcal{H}^2_A\otimes \mathcal{H}^2_B$, $\Pi^{a_{m}}_{A_i}={1/2}\{\openone+a_{m}\hat{a}_i \cdot \vec{\sigma}\}$ and $\Pi^{b_{n}}_{B_j}={1/2}\{\openone+b_{n}\hat{b}_j \cdot \vec{\sigma}\}$, are the projectors generating binary outcomes $a_{m},b_{n} \in \{-1,1\}$.

Any quantum-quantum state which is neither a classical-quantum state nor a quantum-classical state gives rise to (1) a Bell discordant box which 
has $\mathcal{G}>0$ and $\mathcal{Q}=0$, (2) a Mermin discordant box
which has $\mathcal{G}=0$ and $\mathcal{Q}>0$, and (3) a Bell-Mermin discordant box which has $\mathcal{G}>0$ and $\mathcal{Q}>0$, for three different incompatible measurements \cite{Jeba}.
The set of $\mathcal{G}=\mathcal{Q}=0$ correlations forms a nonconvex subset of local correlations. The set of quantum correlations that
violate a Bell-CHSH inequality is a subset of $\mathcal{G}>0$ correlations.
The set of quantum correlations that violate an EPR-steering inequality \cite{CJWR},
\be
\mathcal{M}_{\alpha\beta\gamma}\le\sqrt{2},
\ee
where $[A_0, A_1]=-1$ or  $[B_0, B_1]=-1$, is a subset of $\mathcal{Q}>0$ correlations. The measurements that gives rise to maximal violation of a Bell-CHSH inequality 
(the Tsirelson bound) does not give rise to the violation of an EPR-steering inequality and vice versa due to the monogamy between nonlocality and contextuality,
\be
\mathcal{G}+2\mathcal{Q}\le4. \label{mGQ}
\ee
For general incompatible measurements, quantum correlations arising from the entangled states violate a Bell-CHSH inequality and an EPR-steering inequality simultaneously,
however, trade-off exists between the amount of nonlocality and the amount of contextuality as given by the above relation.   
This trade-off relation is analogous to the trade-off relation between KCBS and 
Bell-CHSH inequalities derived in Ref. \cite{mCNL} in the sense that both reveals monogamy between contextuality and nonlocality.

Since the correlations arising from the product states, $\rho_{AB}=\rho_A \otimes \rho_B$, factorize as the product of marginals corresponding to Alice and Bob,
they have $\mathcal{T}=0$. The set of $\mathcal{T}=0$ boxes is a subset of $\left\{P^{\mathcal{G}=0}_{\mathcal{Q}=0}\right\}$. Any nonproduct state can give rise to nonzero $\mathcal{T}$.
The set of $\mathcal{G}>0$ boxes and $\mathcal{Q}>0$ boxes are the subset of $\mathcal{T}>0$ boxes.
\subsection{Maximally entangled state}
Consider the correlations arising from the Bell state,
\be
\ket{\psi^+}=\frac{1}{\sqrt{2}}(\ket{00}+\ket{11}),
\ee
for the measurement settings: 
${\vec{a}_0}=\hat{x}$, ${\vec{a}_1}=\hat{y}$,
${\vec{b}_0} =\sqrt{p}\hat{x}-\sqrt{1-p}\hat{y}$ and ${\vec{b}_1}=\sqrt{1-p}\hat{x}+\sqrt{p}\hat{y}$, where $\frac{1}{2}\le p \le1$.
The correlations can be decomposed as follows,
\be
P=\mathcal{G}'P^{000}_{PR}+\mathcal{Q}'\left(\frac{P^{000}_{PR}+P^{110}_{PR}}{2}\right)+(1-\mathcal{G}'-\mathcal{Q}')P_N, \label{meb}
\ee 
where $\mathcal{G}'=\sqrt{1-p}$ 
and $\mathcal{Q}'=\sqrt{p}-\sqrt{1-p}$. These correlations violate the Bell-CHSH inequality i.e., $\mathcal{B}_{00}=2\left(\sqrt{p}+\sqrt{1-p}\right)>2$ if
$p\ne1$ and violate the EPR-steering inequality i.e., $\mathcal{M}_{11}=2\sqrt{p}>\sqrt{2}$ if $p\ne\frac{1}{2}$. 
Since the correlation maximally violates the Bell-CHSH inequality when $p=\frac{1}{2}$, each pair in the ensemble of two-qubits exhibits
nonlocality for the chosen measurements \cite{EPR2}. When $p$ is increased from $\frac{1}{2}$ to $1$, the number of pairs exhibiting nonlocality decreases and goes to zero 
when $p=1$.  However, the correlation maximally violates the EPR-steering inequality when $p=1$ which implies that each pair in the ensemble of two-qubits 
exhibits local contextuality as the measurements gives rise to bipartite version of the GHZ paradox \cite{UNLH,GHZ}. 
If $p$ is decreased from $1$ to $\frac{1}{2}$, the number of pairs exhibiting local contextuality decreases and the number of pairs 
exhibiting nonlocality increases as the violation EPR-steering inequality decreases and the violation of Bell-CHSH inequality increases.
The total amount of correlations in the JPD given in Eq. (\ref{meb}) is quantified by,
\be
\mathcal{T}=2\left(\sqrt{p}+\sqrt{1-p}\right)=\mathcal{G}+\mathcal{Q}=\left\{\begin{array}{lr}
\mathcal{G} \quad \text{when} \quad p=\frac{1}{2}\\ 
\mathcal{Q} \quad \text{when} \quad p=1\\ 
\end{array},
\right.
\ee
which implies that the JPD does not have the component of classically correlated box.
When the chosen measurements are performed on the ensemble of two-qubits,
each pair in a fraction of the ensemble quantified by $\mathcal{Q}'$ behaves contextually, each pair in a fraction of the ensemble quantified by $\sqrt{2}\mathcal{G}'$ 
behaves nonlocally and the remaining fraction behaves as noise.
\subsection{Schmidt states}
Consider the correlations arising from the Schmidt states (pure nonmaximally entangled states) \cite{Sch}:
\be
\rho_{S}\!=\!\frac{1}{4}\!\left(\!\openone\!\otimes\!\openone\!+\!c(\!\sigma_z\!\otimes\!\openone\!
+\!\openone\!\otimes\!\sigma_z\!)\!+\!s(\!\sigma_x\!\otimes\!\sigma_x\!-\!\sigma_y\!\otimes\!\sigma_y\!)
\!+\!\sigma_z\!\otimes\!\sigma_z\!\right)\!, \label{Schmidt}
\ee
where $c=\cos2\theta$, $s=\sin2\theta$ and $0\le\theta\le\frac{\pi}{4}$. 
\subsubsection{Bell-Schmidt box}
\textit{(i) Maximally mixed marginals correlations:-}
 The Schmidt states give to the noisy PR-box:
\ba
P=s\left[\frac{1}{\sqrt{2}} P^{000}_{PR}+\left(1-\frac{1}{\sqrt{2}}\right)P_N\right]+(1-s)P_N, \label{BSb}
\ea    
for the measurement settings: 
${\vec{a}_0}=\hat{x}$, ${\vec{a}_1}=\hat{y}$,
${\vec{b}_0} =\frac{1}{\sqrt{2}}(\hat{x}-\hat{y})$ and ${\vec{b}_1}=\frac{1}{\sqrt{2}}(\hat{x}+\hat{y})$.
These correlations violate the Bell-CHSH inequality i.e., $\mathcal{B}_{00}=2\sqrt{2}s>2$ if $s>\frac{1}{\sqrt{2}}$. 
Since the local box in Eq. (\ref{BSb}) gives $\mathcal{B}_{00}=0$,
violation of a Bell-CHSH inequality is not achieved by entanglement when $0<p\le\frac{1}{\sqrt{2}}$.
The correlations have,
\be
\mathcal{T}=\mathcal{G}=2\sqrt{2}s,
\ee
which implies that both $\mathcal{T}$ and $\mathcal{G}$ measure the distance of the box from white noise. For this measurement settings, a fraction of the ensemble quantified
by $s$ exhibits nonlocality purely and the remaining fraction behaves as white noise. 

\textit{(i) Nonmaximally mixed marginals correlations:-}
For the Popescu-Rohrlich measurement settings \cite{PRQB}: ${\vec{a}_0}=\hat{z}$, ${\vec{a}_1}=\hat{x}$,
${\vec{b}_0}=\cos t\hat{z}+\sin t\hat{x}$ and ${\vec{b}_1}=\cos t\hat{z}-\sin t\hat{x}$, 
where $\cos t=\frac{1}{\sqrt{1+s^2}}$, the correlations admit the following decomposition,
\ba
P&=&s^2\left[\frac{1}{\sqrt{1+s^2}}P_{PR}+\left(1-\frac{1}{\sqrt{1+s^2}}\right)P_N\right]\nonumber\\
&&+\left(1-s^2\right)P^{\mathcal{G}=0}_L(\rho).\label{PRQ}
\ea
Here $P^{\mathcal{G}=0}_L(\rho)$ is a nonmaximally mixed marginals box arising from the product state, 
\be
\rho=\rho_A \otimes \rho_B, \label{ScPr}
\ee
where
\be
\rho_A=\rho_B=\frac{1}{2}\left[1+\frac{c}{1-s^2}\right]\ket{0}\bra{0}+\frac{1}{2}\left[1-\frac{c}{1-s^2}\right]\ketbra{1}{1}\nonumber.
\ee
The $\mathcal{G}=0$ box in this decomposition is responsible for the violation of the Bell inequality when $0< s\le\frac{1}{\sqrt{2}}$;
as the box is already lifted to the face of the Bell polytope when $s=0$, any tiny amount of entanglement can give rise to the 
violation of the Bell-CHSH inequality i.e., $\mathcal{B}_{00}=2\sqrt{1+s^2}>2$ if $s>0$. 
The correlations have,
\be
\mathcal{T}=\mathcal{G}=\frac{4s^2}{\sqrt{1+s^2}}.
\ee
That is both $\mathcal{G}$ and $\mathcal{T}$ measure the distance of the box from the local box in the canonical decomposition as 
$P^{\mathcal{G}=0}_L$ in Eq. (\ref{PRQ}) is a product box.
Despite the correlations in Eq. (\ref{BSb}) do not violate the Bell-CHSH inequality when $0< s\le\frac{1}{\sqrt{2}}$, they have more nonlocality
than the correlatios in Eq. (\ref{PRQ}) as the former correlations have more irreducible PR-box component than the latter correlations. 
When the Popescu-Rohrlich measurements
are performed on the Schmidt state, a fraction of ensemble quantified by $\frac{\sqrt{2}s^2}{\sqrt{1+s^2}}$ exhibits nonlocality purely and the pairs in the 
remaining fraction are 
uncorrelated.

For the settings ${\vec{a}_0}=\hat{z}$, ${\vec{a}_1}=\hat{x}$,
${\vec{b}_0}=\frac{1}{\sqrt{2}}(\hat{z}+\hat{x})$ and ${\vec{b}_1}=\frac{1}{\sqrt{2}}(\hat{z}-\hat{x})$, the correlations can be decomposed as follows,
\be
P=s\left[\frac{1}{\sqrt{2}}P_{PR}+\left(1-\frac{1}{\sqrt{2}}\right)P_N\right]+(1-s)P^{\mathcal{G}=0}_L(\rho),\label{ZSb}
\ee
where $P^{\mathcal{G}=0}_L(\rho)$ arises from the correlated state $\rho=\frac{1}{2}\left(1+\frac{c}{1-s}\right)\ket{00}\bra{00}+\frac{1}{2}\left(1-\frac{c}{1-s}\right)\ket{11}\bra{11}$.
The difference between this box and the box in Eq. (\ref{BSb}) is that the local box in Eq. (\ref{ZSb}) is not a product box. 
The correlations violate the Bell inequality i.e., 
$\mathcal{B}_{00}=\sqrt{2}(1+s)>2$ if $s>\sqrt{2}-1$; since the local box in Eq. (\ref{ZSb}) is nonproduct, more entangled states violate the Bell inequality compared
to the correlations in Eq. (\ref{BSb}). 
The correlations have $\mathcal{G}=2\sqrt{2}s$ and
$\mathcal{T}=\sqrt{2}s(1+s)$. Since the JPD has the component of classical correlated box, it has $\mathcal{T}\ne\mathcal{G}$. 
The classical correlations is quantified by,
\be
\mathcal{C}=\mathcal{G}-\mathcal{T}=\sqrt{2}s(1-s)>0 \quad \text{when} \quad s\ne0,1.
\ee
Thus, a fraction of the ensemble given by $s$ exhibits nonlocality purely and the pairs in the remaining fraction exhibits classical correlations.  
\subsubsection{Mermin-Schmidt box}
(i) For the settings
${\vec{a}_0}=\hat{x}$, ${\vec{a}_1}=-\hat{y}$,
${\vec{b}_0}=\hat{y}$ and ${\vec{b}_1}=\hat{x}$, the Schmidt states give rise to the noisy Mermin-box:
\be
P=s \left(\frac{P^{000}_{PR}+P^{111}_{PR}}{2}\right)+(1-s) P_N, \label{MSb}
\ee 
which violates the EPR-steering inequality i.e., $\mathcal{M}_{00}=2s>\sqrt{2}$ if $s>\frac{1}{\sqrt{2}}$.
These local correlations have,
\be
\mathcal{T}=\mathcal{Q}=2s,
\ee
which implies that a fraction of the ensemble quantified by $s$ behaves contextually and the remaining fraction behaves as white noise.

(ii) For the settings
${\vec{a}_0}=\frac{1}{\sqrt{2}}(\hat{z}+\hat{x})$, ${\vec{a}_1}=\frac{1}{\sqrt{2}}(\hat{z}-\hat{x})$,
${\vec{b}_0}=\cos t\hat{z}-\sin t\hat{x}$, and ${\vec{b}_1}=\cos t\hat{z}+\sin t\hat{x}$, 
where $\cos t=\frac{1}{\sqrt{1+s^2} }$, the correlations can be decomposed as follows,
\ba
P&=&s^2\left[\frac{\sqrt{2}}{\sqrt{1+s^2}}\left(\frac{P^{000}_{PR}+P^{111}_{PR}}{2}\right)+\left(1-\frac{\sqrt{2}}{\sqrt{1+s^2}}\right)P_N\right]\nonumber\\
&&+\left(1-s^2\right)P_{\mathcal{Q}=0}(\rho), \label{CSB}
\ea
where $P_{\mathcal{Q}=0}(\rho)$ is a local box arising from the product state in Eq. (\ref{ScPr}). 
Since the local box, $P_{\mathcal{Q}=0}$, in this decomposition gives local bound when $s=0$, the box violates the EPR-steering inequality i.e., 
$\mathcal{M}_{00}=\sqrt{2}\sqrt{1+s^2}>\sqrt{2}$ if $s>0$. The box has,
\be
\mathcal{T}=\mathcal{Q}=\frac{2\sqrt{2}s^2}{\sqrt{1+s^2}}.
\ee
Since the $\mathcal{Q}=0$ box in Eq. (\ref{CSB}) is a product box, the amount of total correlations equals to Mermin discord. Notice that the correlations in Eq .(\ref{MSb}) 
have more Mermin discord than that for the correlations 
in Eq. {\ref{CSB}} which implies that the latter correlations have less amount contextuality than the former correlations.   

For the settings
${\vec{a}_0}=\frac{1}{\sqrt{2}}(\hat{z}+\hat{x})$, ${\vec{a}_1}=\frac{1}{\sqrt{2}}(\hat{z}-\hat{x})$,
${\vec{b}_0}=\frac{1}{\sqrt{2}}(\hat{z}-\hat{x})$, and ${\vec{b}_1}=\frac{1}{\sqrt{2}}(\hat{z}+\hat{x})$, 
the Schmidt states give rise to the following correlation,
\be
P=s\left(\frac{P^{000}_{PR}+P^{111}_{PR}}{2}\right)+(1-s)P^{\mathcal{G}=0}_L(\rho),\label{CSB1}
\ee
where $P^{\mathcal{G}=0}_L(\rho)$ arises from the correlated state $\rho=\frac{1}{2}\left(1+\frac{c}{1-s}\right)\ket{00}\bra{00}+\frac{1}{2}\left(1-\frac{c}{1-s}\right)\ket{11}\bra{11}$.
This box violates the EPR-steering inequality i.e., $\mathcal{M}_{00}=(1+s)>\sqrt{2}$ if $s>\sqrt{2}-1$ which is larger violation than that for the box in Eq. (\ref{MSb}). 
The correlations have $\mathcal{T}=s(1+s)$ and $\mathcal{Q}=2s$ which implies
that the classical correlations in the JPD is quantified as follows,
\be
\mathcal{C}=\mathcal{Q}-\mathcal{T}=s(1-s)>0 \quad \text{when} \quad s\ne0, 1.
\ee
\subsubsection{Bell-Mermin-Schmidt box}
(i) The correlations can be decomposed as follows:
\ba
P&=&\left(1-q-g\right) P_N+ \frac{q}{2}\left(P^{000}_{PR}+P^{11\gamma}_{PR}\right)\nonumber\\
&+&g\left[\frac{1}{\sqrt{2}}P^{000}_{PR}+\left(1-\frac{1}{\sqrt{2}}\right)P_N\right], \label{BMSb}
\ea
for the settings:
${\vec{a}_0}=s\hat{x}+c\hat{y}$,
${\vec{a}_1}=c\hat{x}-s\hat{y}$,
${\vec{b}_0}=\frac{1}{\sqrt{2}}(\hat{x}+\hat{y})$ and
${\vec{b}_1}=\frac{1}{\sqrt{2}}(\hat{x}-\hat{y})$, where
$q=\frac{s\left||c+s|-|c-s|\right|}{\sqrt{2}}$ and $g=|s(s-c)|$.
This box gives,
\be
\mathcal{G}=2\sqrt{2}s|s-c|>0 \quad \text{except when} \quad \theta  \ne0, \frac{\pi}{8}, \nonumber
\ee
\ba
\mathcal{Q}&=&s\sqrt{2}\Big||c+s|-|c-s|\Big|>0 \quad \text{except when} \quad \theta \ne0, \frac{\pi}{4}\nonumber\\
&=&\left\{\begin{array}{lr}
2\sqrt{2}s^2 \quad \text{when} \quad c>s\\ 
2\sqrt{2}cs \quad \text{when} \quad s>c\\ 
\end{array}
\right.\nonumber
\ea
and
\ba
\mathcal{T}&=&\left\{
\begin{array}{lr}
2\sqrt{2}s^2 \quad \text{when} \quad s>c\\ 
2\sqrt{2}cs \quad \text{when} \quad c>s\\ 
\end{array}
\right.\nonumber\\ 
&=&\mathcal{G}+\mathcal{Q},
\ea
which implies that the box has nonclassical correlations purely as the box does not have classical correlation component; 
a fraction of the ensemble quantified by $g$ exhibits nonlocality purely, a fraction of the ensemble quantified by $q$ exhibits contextuality and 
the remaining fraction behaves as white noise. 

(ii) For the settings: ${\vec{a}_0}=c\hat{x}+s\hat{z}$,
${\vec{a}_1}=s\hat{x}-c\hat{z}$,
${\vec{b}_0}=\frac{1}{\sqrt{2}}(\hat{x}+\hat{z})$ and
${\vec{b}_1}=\frac{1}{\sqrt{2}}(-\hat{x}+\hat{z})$, the correlations have the same amount of $\mathcal{G}$ and $\mathcal{Q}$ as for the correlations in Eq. (\ref{BMSb}), 
however, they have different amount of $\mathcal{T}$ which is given as follows,
\ba
\mathcal{T}&=&\left\{
\begin{array}{lr}
\sqrt{2}s^2(1+s) \quad \text{when} \quad s>c\\ 
\sqrt{2}cs(1+s) \quad \text{when} \quad c>s.\\ 
\end{array}
\right.\nonumber
\ea
Thus, the correlations arising from the latter settings (ii) have the decomposition which has the same amount of PR-box and Mermin box components 
as for the former settings (i) except that white noise in Eq. (\ref{BMSb}) 
is replaced by the classically correlated box.
The classical correlations is quantified by, 
\ba
\mathcal{C}&=&\mathcal{G}+\mathcal{Q}-\mathcal{T}\nonumber\\
&=&\left\{
\begin{array}{lr}
\sqrt{2}s^2(1-s) \quad \text{when} \quad s>c\\ 
\sqrt{2}cs(1-s) \quad \text{when} \quad c>s.\\ 
\end{array}
\right.\nonumber
\ea
\subsection{Werner states} 
Consider the correlations arising from the Werner states \cite{Werner},
\be
\rho_W=p\ketbra{\psi^+}{\psi^+}+(1-p)\frac{\openone}{4}.
\ee
The Werner states are entangled if $p>\frac{1}{3}$ and have nonzero quantum discord if $p>0$ \cite{WZQD}. 
Since the Werner states have component of irreducible entangled state  if $p>0$, they can give rise to nonclassical correlations if $p>0$.
As the Werner states can only give rise to maximally mixed marginals correlations,
the nonclassical correlations arising from the Werner states can not have component of classical correlation. 
\subsubsection{Bell-Werner box}
The correlations have the following decomposition,
\be
P=p\left[\frac{1}{\sqrt{2}} P^{000}_{PR}+\left(1-\frac{1}{\sqrt{2}}\right)P_N\right]+(1-p)P_N.
\ee
for the settings that corresponds to the correlation in Eq. (\ref{BSb}). These correlations have,
\be
\mathcal{T}=\mathcal{G}=2\sqrt{2}p.
\ee
\subsubsection{Mermin-Werner box}
The Werner states give rise to the noisy Mermin box,
\be
P=(1-p) P_N+ p\left(\frac{P^{000}_{PR}+P^{111}_{PR}}{2}\right),
\ee 
for the settings corresponding to the correlation in Eq. (\ref{MSb}).
These correlations have,
\be
\mathcal{T}=\mathcal{Q}=2p.
\ee
\subsubsection{Bell-Mermin-Werner box}
Th correlations admit the following decomposition:
\be
P=(1-q-r) P_N+ \frac{q}{2}\left(P^{000}_{PR}+P^{11\gamma}_{PR}\right)+|r|P^{000}_{PR}, \label{BMWb}
\ee
for the settings:
${\vec{a}_0}=\sqrt{p}\hat{x}+\sqrt{1-p}\hat{y}$,
${\vec{a}_1}=\sqrt{1-p}\hat{x}-\sqrt{p}\hat{y}$,
${\vec{b}_0}=\frac{1}{\sqrt{2}}(\hat{x}+\hat{y})$ and
${\vec{b}_1}=\frac{1}{\sqrt{2}}(\hat{x}-\hat{y})$, where
$q=p\sqrt{2(1-p)}$ and $r=\frac{1}{\sqrt{2}}p\left(\sqrt{p}-\sqrt{1-p}\right)$. The box gives
\ba
\mathcal{G}&=&2\sqrt{2}p|\sqrt{p}-\sqrt{1-p}|>0 \quad \text{except when} \quad p\ne0, \frac{1}{2} \nonumber,\\
\mathcal{Q}&=&\sqrt{2}p\Big|\sqrt{p}+\sqrt{1-p}-\left|\sqrt{p}-\sqrt{1-p}\right|\Big|\nonumber\\
&&>0 \quad \text{except when} \quad p\ne0, 1\nonumber\\
&=&\left\{\begin{array}{lr}
2p\sqrt{2p}\quad \text{when} \quad 0\le p\le\frac{1}{2}\nonumber\\ 
2p\sqrt{2(1-p)} \quad \text{when}  \quad\frac{1}{2}\le p\le1\nonumber
\end{array}
\right.
\ea
and
\ba
\mathcal{T}&=&\left\{\begin{array}{lr}
2p\sqrt{2(1-p)}\quad \text{when} \quad 0\le p\le\frac{1}{2}\nonumber\\ 
2p\sqrt{2p}\quad \text{when}  \quad\frac{1}{2}\le p\le1\nonumber
\end{array}
\right.\\
&=&\mathcal{G}+\mathcal{Q}.
\ea
\subsection{Mixture of maximally entangled state with colored noise}
Consider the correlations arising from the mixture of the Bell state and the classically correlated state,
\be
\rho= p \ketbra{\psi^+}{\psi^+}+(1-p) \rho_{CC}, \label{rCC}
\ee
where $\rho_{CC}=\frac{1}{2}(\ketbra{00}{00}+\ketbra{11}{11})$. 
Only when suitable incompatible measurements in the \plane{x}{z} are performed on these states, they have different nonclassical behavior than the Werner states. 
\subsubsection{Bell discordant box}
For the settings that gives rise to the noisy PR-box in Eq. (\ref{BSb}), 
\be
\mathcal{T}=\mathcal{G}=2\sqrt{2}p.
\ee

The measurement settings, 
${\vec{a}_0}=\hat{z}$, ${\vec{a}_1}=\hat{x}$,
${\vec{b}_0}=\cos t\hat{z}+\sin t\hat{x}$ and ${\vec{b}_1}=\cos t\hat{z}-\sin t\hat{x}$, where $\cos t=\frac{1}{\sqrt{1+p^2}}$, gives rise to the 
violation of the Bell inequality, $\mathcal{B}_{00}=2\sqrt{1+p^2}>2$, if $p>0$. Since the box lies at the face of the Bell polytope when $p=0$, any tiny amount of 
entanglement gives rise to the violation Bell-CHSH inequality.
The correlations have $\mathcal{G}=\frac{4p^2}{\sqrt{1+p^2}}$ and $\mathcal{T}=2\sqrt{1+p^2}$ which implies that the classical correlations is quantified as follows,
\be
\mathcal{C}=\mathcal{T}-\mathcal{G}=\frac{2(1-p^2)}{\sqrt{1+p^2}}.
\ee
\subsubsection{Mermin discordant box}
The measurement settings, 
${\vec{a}_0}=\frac{1}{\sqrt{2}}(\hat{z}+\hat{x})$, ${\vec{a}_1}=\frac{1}{\sqrt{2}}(\hat{z}-\hat{x})$,
${\vec{b}_0}=\cos t\hat{z}+\sin t\hat{x}$ and ${\vec{b}_1}=\cos t\hat{z}-\sin t\hat{x}$, where $\cos t=\frac{1}{\sqrt{1+p^2}}$, gives rise to the
violation of the EPR-steering inequality, $\mathcal{M}_{00}=\sqrt{2}\sqrt{1+p^2}>\sqrt{2}$, if $p>0$. The correlations have 
$\mathcal{Q}=\frac{2\sqrt{2}p^2}{\sqrt{1+p^2}}$ and $\mathcal{T}=\sqrt{2}\sqrt{1+p^2}$ which implies that the amount of classical correlations in the JPD is 
quantified as follows,
\be
\mathcal{C}=\mathcal{T}-\mathcal{Q}=\frac{\sqrt{2}(1-p^2)}{\sqrt{1+p^2}}.
\ee
\subsubsection{Bell-Mermin discordant box}
For the measurement settings
${\vec{a}_0}=\sqrt{p}\hat{z}+\sqrt{1-p}\hat{x}$,
${\vec{a}_1}=\sqrt{1-p}\hat{z}-\sqrt{p}\hat{x}$,
${\vec{b}_0}=\frac{1}{\sqrt{2}}(\hat{z}+\hat{x})$ and
${\vec{b}_1}=\frac{1}{\sqrt{2}}(\hat{z}-\hat{x})$,
the correlations have the same amount of Bell discord and Mermin discord as for the correlations in Eq. (\ref{BMWb}), however, the box has different amount of 
total correlations,
\ba
\mathcal{T}&=&\left\{\begin{array}{lr}
(1+p)\sqrt{2(1-p)}\quad \text{when} \quad 0\le p\le\frac{1}{2}\nonumber\\ 
(1+p)\sqrt{2p}\quad \text{when}  \quad\frac{1}{2}\le p\le1\nonumber
\end{array}
\right.\\
&>&\mathcal{G}+\mathcal{Q},
\ea
because of the classically correlated noise. The amount of classical correlations is given by,
\ba
\mathcal{C}&=&\mathcal{T}-\mathcal{G}-\mathcal{Q}\nonumber\\
&=&\left\{
\begin{array}{lr}
(1-p)\sqrt{2(1-p)}\quad \text{when} \quad 0\le p\le\frac{1}{2}\nonumber\\ 
(1-p)\sqrt{2p}\quad \text{when}  \quad\frac{1}{2}\le p\le1.\nonumber
\end{array}
\right.\nonumber
\ea
\section{Conclusions}\label{disconc}
We interpreted Bell discord and Mermin discord as distance measures for nonlocality and contextuality which led us to construct the distance measure, $\mathcal{T}$, 
which is zero iff the box is product. 
We have discussed the problem of separating the total correlations in the quantum boxes into nonlocality, contextuality and classical correlations using these
three measures. We have studied the additivity relation for quantum correlations in two-qubit systems.  
The distance measure interpretation has allowed us to understand why some entangled states cannot lead to the violation of a Bell-CHSH inequality.
\section*{Acknowledgements}
I thank IISER Mohali for financial support. I had useful discussions with Dr. Pranaw Rungta, and he suggested the problem.    

\end{document}